\newcommand{\bin}[3]{\operatorname{\mathbf{bin}}({#1}, {#2}, {#3})}
\newcommand{\lbin}[2]{\operatorname{\mathbf{lbin}}({#1}, {#2})}
\newcommand{\vecspace}[2]{\mathbb{Z}_{#1}^{#2}}
\newcommand{\binvecspace}[1]{\vecspace{2}{#1}}
\newcommand{\linearmaps}[2]{\mathcal{L}_{#1}^{#2}}
\newcommand{\surjectivelinearmaps}[2]{\mathcal{LS}_{#1}^{#2}}
\newcommand{\probs}[2]{\operatorname{\mathbf{Pr}}_{{#1}}\left[{#2}\right]}
\newcommand{\expects}[2]{\operatorname{\mathbf{E}}_{{#1}}\left[{#2}\right]}
\newtheorem{theorem}{Theorem}
\newtheorem{proposition}{Proposition}
\theoremstyle{definition}
\newtheorem{definition}{Definition}
\newtheorem*{propositionrep*}{Proposition 3}
\newcommand{\TheTitle}{A Note On the Size of Largest Bins Using Placement With Linear Transformations} 
\newcommand{\TheAuthors}{Martin~Babka}
\title{\TheTitle}
\author{\TheAuthors}
\begin{document}

\maketitle

\begin{abstract}
We study the placement of $n$ balls into $n$ bins where balls and bins are represented as two vector spaces over $\binvecspace{}$. The placement is done according to a linear transformation between the two vector spaces.
We analyze the expected size of a largest bin. The only currently known upper bound is $O(\log n \log \log n)$ by Alon et al. and holds for placing $n \log n$ balls into $n$ bins.
We show that this bound can be improved to $O(\log n)$ in the case when $n$ balls are placed into $n$ bins.
We use the same basic technique as Alon et al. but give a tighter analysis for this case.
\end{abstract}

\section{Introduction}

Research of hash function families is nowadays naturally focused on finding fast systems suitable for universal hashing, cuckoo hashing, linear probing, load balancing, etc.
Each application has slightly different requirements on the system.
For example universal hashing~\cite{cw} requires families having small largest bins, for linear probing we have to provide at least a 5-independent family~\cite{linear-probing}.
Additionally the time to compute the hash function should be small.

In this article we are dealing with the size of a largest bin in a balls-and-bins setting.
It is known that if we place $n$ balls into $n$ bins randomly and independently, then with high probability the size of a largest bin is $\Theta(\log n/\log \log n)$.
There are non-trivial hash function families that achieve the sublogarithmic bound such as systems constructed by Siegel~\cite{siegel}, the systems given in~\cite{celisetal}, tabulation hashing~\cite{charhash}, and any $\Theta(\log n/\log \log n)$-independent hash function family. 
The hash function families with high degrees of independence provide asymptotically perfect results for other applications e.g. concentration bounds, Bloom filters, ``two choices'', etc.

Unfortunately the systems with high degrees of independence are inefficient in practice either because of their size and/or speed according to Siegel's lower bound~\cite{siegel}. 
So the research then focused on finding hash function families best fitting the needs of an application. 
There are systems designed to achieve the optimal size of a largest bin for balls-and-bins model that emerged in~\cite{celisetal}.
For cuckoo hashing there are known function families and modifications of the scheme which preserve the expected $O(1)$ operation time such as cuckoo hashing with stash from~\cite{mitzenmacher-cuckoo} and~\cite{dietzfelbinger-cuckoo} without using $\Omega(\log n)$-independent hash function family.
For linear probing it is known that 5-independence is enough to achieve the expected constant probe sequence length~\cite{linear-probing}.

The system of linear transformations between the binary vector spaces forms a natural two-wise independent system of functions.
We show that using this system the size of a largest bin is nearly optimal despite its limited independence.
 Precisely if $n = 2^b$ and $n$ balls, chosen arbitrarily from $\binvecspace{u}$, are placed into $n$ bins using a randomly chosen linear transformation between $\binvecspace{u}$ and $\binvecspace{b}$, then the expected size of a largest bin is $O(\log n)$.
Previously Alon et al \cite{alonetal} showed the bound $O(\log n \log \log n)$ for placement of $n \log n$ balls into $n$ bins.
This bound certainly holds also for placing $n$ balls into $n$ bins.
We improve the previous bound by $\log \log n$ factor when placing $n$ balls into $n$ bins.

We use similar technique as Alon et al. however we use a different parametrization that suits the current setting. 
As a consequence, universal hashing with linear transformations can be implemented so that the amortized running times of the operations match the running times achieved by the balanced trees.

\section{Notation and the setting}
Let $u, b \in \mathbb{N}$, $\vec{a} \in \binvecspace{b}$ and $A$ be a binary matrix of dimension $u \times b$, i.e. $A \in \{0, 1\}^{u \times b}$.
By an \emph{affine linear transformation} from $\binvecspace{u}$ to $\binvecspace{b}$ we understand a mapping $\vec x \mapsto A\vec x + \vec{a}$.
By \emph{linear transformation} from $\binvecspace{u}$ to $\binvecspace{b}$
we understand a mapping $\vec x \mapsto A\vec x$, i.e. an affine transformation with $\vec{a} = \vec{0}$.
Notice that the choice of $\vec{a}$ does not change the bin sizes and thus in our case it is sufficient to analyze the linear transformations only.

By $\linearmaps{u}{b}$ we denote all linear transformations from $\binvecspace{u}$ to $\binvecspace{b}$.
By $\surjectivelinearmaps{u}{b}$ we denote all surjective linear transformations from $\binvecspace{u}$ onto $\binvecspace{b}$.
Let $S \subseteq \binvecspace{u}$ and $T \in \linearmaps{u}{b}$, then by $\lbin{T}{S}$ we denote the size of a largest bin created by $T$ when placing $S$ into $\binvecspace{b}$, i.e. $\lbin{T}{S} = \operatorname{max}_{\vec y \in \binvecspace{b}} |T^{-1}(\vec y) \cap S|$.

When considering probability of an event $E$ or the expected value of a variable $V$ we use the notation $\probs{h \in_U H}{E}$ or $\expects{h \in_U H}{V}$ to indicate that the probability space is formed by the random uniform choice of an object $h$ from a set $H$.

All the logarithms in this article are to the base 2.

\section{Placement of \texorpdfstring{$n$}{n} Balls into \texorpdfstring{$n$}{n} Bins}

In this section we prove Theorem~\ref{theorem-n-to-n} for placement of $n$ balls into $n$ bins using linear transformations.
\begin{theorem}
\label{theorem-n-to-n}
Let $u, b \in \mathbb{N}$, $S \subseteq \binvecspace{u}$ and $|S| \leq 2^b$. Then $\expects{T \in_U \linearmaps{u}{b}}{\lbin{T}{S}} = O(\log |S|)$.
\end{theorem}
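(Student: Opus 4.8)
\medskip
\noindent\textbf{Sketch of the intended proof.}
The plan is to prove, for every threshold $t$, a bound on $\probs{T}{\lbin{T}{S}\ge t}$ that decays fast enough that the tail sum $\expects{T\in_U\linearmaps{u}{b}}{\lbin{T}{S}}=\sum_{t\ge1}\probs{T}{\lbin{T}{S}\ge t}$ is $O(\log|S|)$. Two harmless reductions come first. Bin sizes are invariant under affine shifts of $S$, and restricting $T$ to the subspace linearly spanned by $S$ does not change them; so we may assume $\vec 0\in S$ and that $S$ spans $\binvecspace{u}$. Moreover, adjoining extra output coordinates only refines the partition into bins and can never increase $\lbin{T}{S}$; hence it suffices to treat the case $b=\lceil\log|S|\rceil$, i.e.\ $2^{b-1}<|S|\le 2^{b}$, in which case the target $O(\log|S|)$ coincides with $O(b)$ and one has $|S|\le 2^{b}$.

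The basic structural fact is that the largest bin of $T$ is the richest coset of its kernel: $\lbin{T}{S}=\max\{\,|S\cap C|:C\text{ a coset of }\ker T\,\}$. Thus $\lbin{T}{S}\ge t$ forces $\ker T$ to contain a subspace $W$ that is \emph{$t$-rich} for $S$ (some coset of $W$ meets $S$ in at least $t$ points); taking $W$ to be the linear span of $\{\vec s-\vec s':\vec s,\vec s'\in S\cap C\}$ for the witnessing coset $C$, we get a $W$ generated by differences of elements of $S$ with $\dim W\ge\lceil\log t\rceil$ (since $\ge t$ points of $S$ lie in a single coset of $W$). For a fixed $d$-dimensional $W$ one has the exact identity $\probs{T}{W\subseteq\ker T}=\probs{T}{T|_{W}\equiv 0}=2^{-bd}$, because the images of a basis of $W$ are independent and uniform in $\binvecspace{b}$. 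Hence
\[
  \probs{T}{\lbin{T}{S}\ge t}\;\le\;\sum_{d\ge\lceil\log t\rceil}\bigl|\{\,t\text{-rich }d\text{-dimensional }W\text{ generated by }S\,\}\bigr|\cdot 2^{-bd},
\]
and the whole difficulty is to count $t$-rich subspaces economically: the naive estimate $\binom{|S|}{d+1}$ is far too wasteful — it yields only $\probs{T}{\lbin{T}{S}\ge t}=O(|S|)/(\lceil\log t\rceil)!$, a polynomial rather than logarithmic bound on the expectation.

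The remedy, and the step I expect to be the main obstacle, is a ``leap'' parametrization of the relevant subspaces, which is where the analysis here should depart from Alon et al. Instead of describing a $t$-rich $W$ by an arbitrary generating tuple, one builds it through a flag $0=W_{0}\subset W_{1}\subset\dots\subset W_{d}=W$ and exploits that adjoining one dimension at most doubles the richness $\rho(W_{i})=\max_{C}|S\cap C|$; therefore $\rho$ climbs from $1$ to $\ge t$ through only $O(\log t)$ \emph{leaps} — steps where $\rho$ grows by a constant factor — while the remaining (inert) steps each contribute a full $2^{-b}$ for free. One then argues that a leap is certified by a bounded amount of data about $S$ relative to the current flag, so the number of admissible extensions at a leap is bounded, while the inert extensions are paid for exactly by their $2^{-b}$ factors. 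Making this precise requires a careful definition of ``leap'' (the role of $\leap{\cdot}{\cdot}{\cdot}$), and conditioning the kernel along the flag through the ``factored'' and ``rotated'' sub-families of linear maps ($\hfact{\cdot}{\cdot}$, $\rot{\cdot}{\cdot}$) together with a tight accounting of constants; this is the technical heart.

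The payoff is a bound of the shape $\probs{T}{\lbin{T}{S}\ge t}\le\exp\bigl(b-\Omega(t)\bigr)$, which is at most $1$ once $t=\Omega(b)$ and decays geometrically beyond that, so
\[
  \expects{T\in_U\linearmaps{u}{b}}{\lbin{T}{S}}\;=\;\sum_{t\ge1}\probs{T}{\lbin{T}{S}\ge t}\;\le\;O(b)+\sum_{t>cb}\exp\bigl(b-\Omega(t)\bigr)\;=\;O(b)\;=\;O(\log|S|),
\]
which is exactly the assertion of the theorem. (Any tail bound that is summable past $t=\Theta(b)$ would do; the content is entirely in getting the crossover at $\Theta(b)$ rather than at a polynomial value of $t$.)
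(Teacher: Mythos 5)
Your framework is sound as far as it goes: the tail-sum identity, the reduction to $b=\lceil\log|S|\rceil$, the observation that the bins of $T$ are the cosets of $\operatorname{Ker}(T)$, the identity $\probs{T}{W\subseteq\operatorname{Ker}(T)}=2^{-bd}$ for a fixed $d$-dimensional $W$, and the diagnosis that the naive union bound over generating tuples is hopeless are all correct. But there is a genuine gap exactly where you flag ``the technical heart'': the claim that a $t$-rich subspace can be built through a flag in which only $O(\log t)$ steps are ``leaps'' certified by a bounded amount of data, while the remaining inert steps are ``paid for exactly by their $2^{-b}$ factors,'' is never made precise, and it is not routine. An inert extension $W_i\subset W_{i+1}$ is still specified by a new difference $\vec s-\vec s'$ with $\vec s,\vec s'\in S$, of which there are up to $\binom{|S|}{2}\approx 2^{2b-1}$; the accompanying factor $2^{-b}$ does not absorb this multiplicity (each such step costs a factor of order $2^{b}$, not $O(1)$), and you give no mechanism or definition of ``certified by a bounded amount of data'' that would repair this. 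Until that count is carried out, all you have is $\probs{T}{\lbin{T}{S}\ge t}\le N_t\cdot 2^{-b\lceil\log t\rceil}$ with no usable bound on the number $N_t$ of $t$-rich subspaces. (You also promise a tail of the shape $\exp(b-\Omega(t))$, which is stronger than needed and stronger than what the paper actually obtains -- its tail decays only polynomially, roughly as $r^{-3/2}$ beyond $\ell=\Theta(c_\epsilon\log n)$, which is already summable.)

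For comparison, the paper closes this step by a different device, inherited from Alon et al.\ rather than departing from them: factor the uniform $T$ as $T_1\circ T_0$ with $T_0\in\linearmaps{u}{f}$, $T_1\in\surjectivelinearmaps{f}{b}$ and $f\approx b+\log r-\log\log r$. If some bin of $T$ has $\ell\approx 2c_\epsilon r$ elements, then with probability at least $1-\epsilon$ the map $T_0$ sends that bin onto the entire fiber $T_1^{-1}(\vec y)$ (\cref{proposition-epsilon,proposition-e1-e2}), so it suffices to bound the probability that some fiber of $T_1$ is contained in $T_0(S)$; since $T_0(S)$ occupies only a $\mu=2^{b-f}$ fraction of $\binvecspace{f}$, \cref{proposition-prob-bound} applied to its complement yields $\mu^{-\log b-\log\mu+\log\log\mu^{-1}}$, which is below $r^{-3/2}$ once $r\ge 4\log n$. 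No enumeration of kernel subspaces is needed. If you wish to pursue your kernel-coset route you must supply the leap-counting lemma in full; as written, the proposal does not constitute a proof.
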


We proceeded similarly to \cite{alonetal} and reuse the following propositions from \cite{alonetal}.
\begin{proposition}[{\cite[Theorem~7b, p.~7]{alonetal}}]
\label{proposition-prob-bound}
Let $t, u \in \mathbb{N}$, $t < u$.
Let $S \subset \binvecspace{u}$ such that $\alpha = 1 - \frac{|S|}{2^u}$, $\alpha < 1$.
Then $\probs{T \in_U \surjectivelinearmaps{u}{t}}{T(S) \neq \binvecspace{t}} \leq \alpha^{u - t - \log t + \log \log \frac{1}{\alpha}}$.
\end{proposition}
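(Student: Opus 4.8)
The plan is to pass to the kernel of $T$ and reduce the statement to an extremal count of affine subspaces contained in the complement of $S$. First I would record that for a uniformly random $T \in \surjectivelinearmaps{u}{t}$ the kernel $K = \ker T$ is uniformly distributed among the $(u-t)$-dimensional linear subspaces of $\binvecspace{u}$ (the number of surjections with a prescribed kernel equals $|GL_t(\binvecspace{})|$, independent of the kernel). Since $T$ is surjective, $T(S) = \binvecspace{t}$ holds exactly when every coset of $K$ meets $S$; hence $T(S)\neq\binvecspace{t}$ is equivalent to the event that some coset of $K$ is disjoint from $S$, i.e. lies entirely in the complement $\overline{S}$, and this event depends on $T$ only through $K$. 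Writing $\alpha = |\overline{S}|/2^u$, the target probability thus equals $\probs{K}{\overline{S}\text{ contains a full coset of }K}$.

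Next I would apply a first-moment bound. Every $(u-t)$-dimensional affine subspace $A$ is a coset of exactly one linear subspace, namely its direction $A - A$, so $\probs{K}{A\text{ is a coset of }K} = 1/\binom{u}{u-t}_2$. Summing over all affine flats contained in $\overline{S}$ gives $\probs{K}{\overline{S}\text{ contains a coset of }K} \le N / \binom{u}{u-t}_2$, where $N$ is the number of $(u-t)$-dimensional affine subspaces lying inside $\overline{S}$. A complementary, more probabilistic, way to see the same quantity is to bound the chance that one fixed coset avoids $S$ by exposing a basis of the fibre one direction at a time: each new direction must push an entire translate into the density-$\alpha$ set $\overline{S}$, contributing a factor at most $\alpha$, so a single coset is empty with probability at most $\alpha^{u-t}$, and the crude union bound over the $2^t$ cosets yields $2^t \alpha^{u-t}$.

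The obstacle is that $2^t\alpha^{u-t}$ is not yet the claimed bound: writing $\beta = \log(1/\alpha)$, the target exponent rearranges as $\alpha^{u-t-\log t + \log\log(1/\alpha)} = \alpha^{u-t}\,(t/\beta)^{\beta}$, and $(t/\beta)^{\beta}$ is genuinely smaller than $2^t$ (it is maximized near $\beta = t/e$, where it is $2^{\Theta(t)}$ with constant below $1$). So the plain union bound loses exactly the factor we must save, and the real work is the sharp estimate $N \le \binom{u}{u-t}_2\,\alpha^{u-t}(t/\beta)^{\beta}$. Here the per-basis-point heuristic is too lossy, because it ignores that an empty coset forces all $2^{u-t}$ of its points into $\overline{S}$ simultaneously; I would exploit this by counting the affine flats inside $\overline{S}$ through a constrained build-up and optimizing the order in which directions are added, balancing the number of directions spent against the density $\alpha$. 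The optimum spends on the order of $\beta = \log(1/\alpha)$ directions before the constraint bites, and it is precisely this optimization that manufactures the corrective $-\log t + \log\log(1/\alpha)$ terms; carrying it out carefully (and using $\alpha < 1$ to keep the factors meaningful) is the crux of the argument.
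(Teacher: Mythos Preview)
The paper does not prove this proposition at all; it is quoted from \cite{alonetal} and used as a black box, so there is no in-paper argument to compare your proposal against.

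Evaluating the proposal on its own: the reduction is sound. The kernel of a uniform surjection is indeed uniform over $(u-t)$-dimensional subspaces, and $T(S)\neq\binvecspace{t}$ is equivalent to some coset of $K$ lying inside $\overline S$. The crude union bound $2^t\alpha^{u-t}$ is also correct, and you correctly identify that the target exponent rewrites as $\alpha^{u-t}(t/\beta)^\beta$ with $\beta=\log(1/\alpha)$.

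The gap is that you stop exactly where the content begins. You state the inequality you need, namely that the number $N$ of $(u-t)$-dimensional affine flats inside $\overline S$ satisfies $N\le\binom{u}{u-t}_2\,\alpha^{u-t}(t/\beta)^\beta$, and then offer only a one-sentence heuristic (``spend on the order of $\beta$ directions before the constraint bites'') in lieu of a proof. Nothing in the proposal actually carries out any counting or any optimization, and the per-basis-vector picture you sketch earlier (``each new direction contributes a factor $\alpha$'') is precisely the argument you already acknowledged is too lossy; you have not said what replaces it. As written, the proposal is a correct reformulation of the problem together with an assertion that the hard estimate is provable, not a proof of the estimate. The argument in \cite{alonetal} obtains the saving $-\log t+\log\log(1/\alpha)$ by factoring the surjection through a chain of intermediate spaces and choosing the length of the chain optimally, rather than by a direct extremal count of flats in $\overline S$; if you want to pursue the flat-counting route you will need an argument of comparable strength, and that argument is currently absent.
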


\begin{proposition}[{\cite[Theorem~{7a}, p.~7]{alonetal}}]
\label{proposition-epsilon}
For each $\epsilon > 0$ there exists $c_\epsilon > 0$ depending solely on $\epsilon$, such that for each $t \in \mathbb{N}, S \subseteq \binvecspace{u}$ satisfying $|S| \geq c_\epsilon t 2^t$ it holds $\probs{T \in_U \linearmaps{u}{t}}{T(S) = \binvecspace{t}} \geq 1 - \epsilon$.
\end{proposition}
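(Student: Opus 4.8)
The plan is to view surjectivity of $T$ on $S$ as a covering condition and to bound the failure probability by the expected number of target points that are missed. Writing the uncovered count as $\sum_{\vec y \in \binvecspace{t}} \mathbf{1}[\vec y \notin T(S)]$ and applying Markov's inequality (a union bound) gives
\[
  \probs{T \inu \linearmaps{u}{t}}{T(S) \neq \binvecspace{t}} \le \sum_{\vec y \in \binvecspace{t}} \probs{T \inu \linearmaps{u}{t}}{\vec y \notin T(S)}.
\]
It therefore suffices to show that each per-point miss probability is at most $\epsilon\, 2^{-t}$ once $|S| \ge c_\epsilon\, t\, 2^t$ for a suitably chosen constant $c_\epsilon$; summing the $2^t$ terms then yields total failure probability at most $\epsilon$. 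Here the event $\vec y \notin T(S)$ says that the fiber $T^{-1}(\vec y)$, a coset of $\ker T$, meets $S$ in no point; in particular, whenever $T$ fails to be full rank every $\vec y$ outside $\operatorname{im} T$ is automatically missed, so this degenerate case must be folded into the same estimate (it contributes only a lower-order term once $|S|$, hence $u$, is large enough).

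The crux, and essentially the whole difficulty, is the per-point bound $\probs{T \inu \linearmaps{u}{t}}{\vec y \notin T(S)} \le \epsilon\, 2^{-t}$, i.e. the miss probability must be made exponentially small in $t$. The obstacle is that linear maps are only pairwise independent: for a fixed nonzero $\vec x$ the value $T\vec x$ is uniform on $\binvecspace{t}$, and two distinct nonzero vectors give independent values, but no higher independence is available. Consequently any first- or second-moment estimate of the load $|T^{-1}(\vec y) \cap S|$ yields only $\prob{\vec y \notin T(S)} = O(2^t/|S|) = O(1/(c_\epsilon t))$, which is polynomially small but is hopelessly inflated by the factor $2^t$ in the union bound. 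So elementary concentration cannot suffice and one is forced to use the linear-algebraic structure directly. A clean way to expose that structure is to condition on $\ker T$: up to the full-rank caveat, a fixed fiber behaves like a uniformly random affine subspace $A$ of dimension $u-t$, and the problem reduces to showing $\prob{A \cap S = \emptyset} \le \epsilon\, 2^{-t}$ for such $A$, where the density hypothesis gives $\expect{|A \cap S|} \ge c_\epsilon\, t$.

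To obtain the exponential decay I would pass to higher moments via inclusion--exclusion, writing
\[
  \probs{T \inu \linearmaps{u}{t}}{\vec y \notin T(S)} = \sum_{R \subseteq S} (-1)^{|R|}\, \probs{T \inu \linearmaps{u}{t}}{T\vec x = \vec y \text{ for all } \vec x \in R},
\]
and observing that each term is governed purely by the linear type of $R$ over $\binvecspace{}$: it equals $2^{-t\,r}$ (when the system is consistent) with $r$ the rank of the associated linear system, equivalently it is expressible through the character sums of $\mathbf{1}_S$. Grouping the subsets $R$ by rank and summing is the genuine technical core: this is what upgrades the polynomially small second-moment estimate to the required exponential-in-$t$ bound, and it is precisely here that the factor $t$ in the hypothesis $|S| \ge c_\epsilon\, t\, 2^t$ is consumed. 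I expect this higher-order bookkeeping --- controlling the rank-weighted contribution of all linearly dependent subsets of an arbitrary, worst-case $S$ --- to be the main obstacle, and it is exactly the technical content carried by the cited result of Alon et al.
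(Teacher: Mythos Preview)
The paper does not supply its own proof of this proposition: it is quoted from Alon et al.\ (their Theorem~7a), with only the added remark that one may take $c_\epsilon = 4(2/\epsilon)^{8/\epsilon}$. There is thus no in-paper argument to compare against.

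Judged on its own, your proposal is not a proof but a plan whose decisive step is left open. You reduce, via a union bound over $\vec y\in\binvecspace{t}$, to showing $\prob{\vec y\notin T(S)}\le\epsilon\,2^{-t}$, correctly note that pairwise independence (Chebyshev) yields only the polynomially small $O(1/(c_\epsilon t))$ which is destroyed by the $2^t$ summands, and then propose to recover the missing exponential-in-$t$ factor through an inclusion--exclusion expansion grouped by the $\binvecspace{}$-rank of subsets $R\subseteq S$. You do not carry this computation out; instead you close by saying the required bookkeeping is ``exactly the technical content carried by the cited result of Alon et al.'' That is circular, since the proposition \emph{is} that cited result. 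It is also worth noting that the argument in Alon et al.\ does not proceed via your per-point union bound: the shape of the companion estimate in Proposition~\ref{proposition-prob-bound} (their Theorem~7b), with exponent $u-t-\log t+\log\log\alpha^{-1}$, reflects an iterative dimension-reduction analysis rather than a moment or character-sum calculation, and Theorem~7a is then read off by specializing parameters. Your inclusion--exclusion route may be workable, but making the rank-stratified alternating sum collapse to $\epsilon\,2^{-t}$ for an \emph{arbitrary} $S$ is itself the substantive claim, and you have not supplied it.
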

Let us note that from the proof in~\cite{alonetal} it follows that $c_\epsilon$ may be chosen as $4\left(\frac{2}{\epsilon}\right)^{\frac{8}{\epsilon}}$.

Following~\cite{alonetal} we define two events needed to estimate the probability of having a  bin of size $\ell$. 
The first event, $E_1$, occurs iff there is a bin of size at least $\ell$.
The second one, $E_2$ is used to upper bound the probability of occurrence of $E_1$.
\begin{definition}[{\cite[Event $E_1$, p.~11]{alonetal}}]
Let $u, b, \ell \in \mathbb{N}$, $T \in \linearmaps{u}{b}$. We put \[E_1(S, T, \ell) \equiv \exists \vec{y} \in \binvecspace{b} \colon |T^{-1}(\vec y) \cap S| \geq \ell.\]
\end{definition}

To define the second event, $E_2$, we decompose the chosen random linear map $T \in \linearmaps{u}{b}$ into $T_0 \in \linearmaps{u}{f}$ and a surjective $T_1 \in \surjectivelinearmaps{f}{b}$ satisfying $T = T_1 \circ T_0$.

\begin{definition}[{\cite[Event $E_2$, p.~11]{alonetal}}]
Let $u, f, b \in \mathbb{N}$, $f \geq b$, $S \subseteq \binvecspace{u}$, $T_0 \in \linearmaps{u}{f}$ and $T_1 \in \surjectivelinearmaps{f}{b}$.
The event $E_2(S, T_0, T_1)$ occurs when $\exists \vec{y} \in \binvecspace{b} \colon T_1^{-1}(\vec y) \subseteq T_0(S)$.
\end{definition}

Refer to \cref{fig-decomposition-general} for the general case of the decomposition and to \cref{fig-decomposition-e2} for the case when $E_2$ occurs. 
Now we show a relation between $E_1$ and $E_2$.

\begin{figure}
	\caption{The decomposition of $T$, general case. Let us note that $F_A = T_1^{-1}(\vec{y})$, $U_A = T^{-1}(\vec y)$ and $S_A = S \cap U_A$.}
	\label{fig-decomposition-general}
\begin{center}
	\begin{tikzpicture}
		\draw[->] (4,2) -- (5,2) node[left=0.5cm,below] {$T_0$};
		\draw[green] (2,2) circle (2cm) node[left=1cm,below=1cm] {$\binvecspace{u}$};
		\draw[green] (6.75,2) circle (1.75cm) node[left=0.75cm,below=0.75cm] {$\binvecspace{f}$};
		\draw[->] (8.5,2) -- (9.5,2) node[left=0.5cm,below] {$T_1$};
		\draw[green] (11,2) circle (1.5cm) node[left=0.6cm,below=0.6cm] {$\binvecspace{b}$};
		
		\draw[blue] (10.5,2.75) circle (0.05cm) [fill=black] node[anchor=west] {$\vec{y}$};
		\draw[->,red] (10.5,2.75) -- (7.5,2.75) node[left=-1.7cm,above] {$T_1^{-1}$};
		\draw[dashed,red] (6.75,2.75) circle (0.75cm)  node[] {$F_A$};
		\draw[dotted,blue] (6.2,1.95) circle (0.75cm)  node[left=0cm,below=0.1cm] {$T_0(S)$};

		\draw[dashed,red] (2.2,2.75) circle (1cm)  node[left=-0.35cm,below=-0.5cm] {$U_A$};
		\draw[dotted,blue] (1.4,2.1) circle (1.2cm)  node[left=0.35cm,below=0.1cm] {$S$};
		
		\draw[magenta] (1.7,2.5) node[] {$S_A$};
		
		\draw[->] (10.5,2.75) arc (0:152:4.2cm and 2cm) node[midway,below=0.0cm] {$T^{-1}$};
		
		\draw[->,red] (3.2,2.75) -- (6,2.75) node[left=1.4cm,above] {$T_0$};
	\end{tikzpicture}
\end{center}
\end{figure}
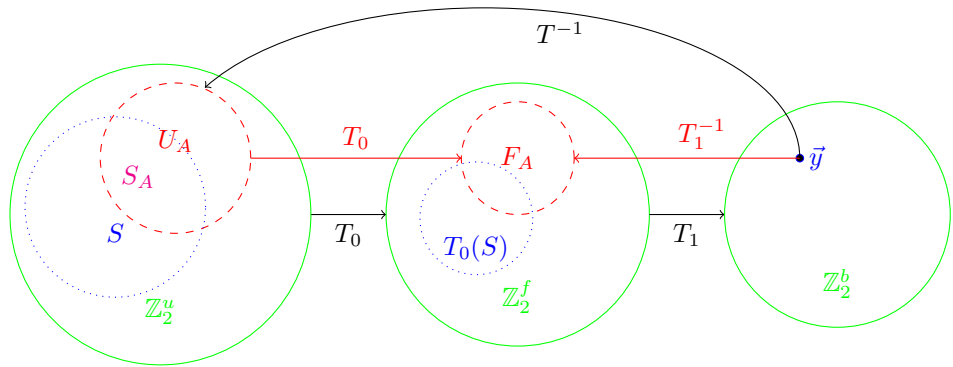

\begin{proposition}[{\cite[Proposition 3.2, p.~11]{alonetal}}]
\label{proposition-e1-e2}
For each $\epsilon > 0$ there is $c_\epsilon > 0$ such that for each $u, f, b, \ell \in \mathbb{N}$ satisfying $u \geq f \geq b$, $\ell \geq c_\epsilon (f - b)2^{f-b}$ and for arbitrary $S \subseteq \binvecspace{u}$, it holds that
$
\probs{T \in_U \linearmaps{u}{b}}{E_1(S, T, \ell)} \leq \frac{1}{1 - \epsilon}\probs{T_0 \in_U \linearmaps{u}{f}, T_1 \in_U \surjectivelinearmaps{f}{b}}{E_2(S, T_0, T_1)}.
$
In addition the value $c_\epsilon$ can be chosen according to \cref{proposition-epsilon} and depends only on $\epsilon$.
\end{proposition}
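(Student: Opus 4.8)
The plan is to realise the decomposition $T = T_1 \circ T_0$ as a two-stage way of sampling a uniform $T$, and then to show that, after conditioning, $E_2$ is forced with probability at least $1-\epsilon$ whenever $E_1$ holds.

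First I would check that drawing $T_0 \inu \linearmaps{u}{f}$ and $T_1 \inu \surjectivelinearmaps{f}{b}$ independently and putting $T := T_1 \circ T_0$ yields $T$ uniform in $\linearmaps{u}{b}$: for a fixed surjective $T_1$ the assignment $T_0 \mapsto T_1 \circ T_0$ is a linear map between the finite spaces $\linearmaps{u}{f}$ and $\linearmaps{u}{b}$ that is onto (a surjective $T_1$ has a linear right inverse), so all of its fibres have equal size and it pushes the uniform distribution to the uniform distribution; averaging over $T_1$ keeps this. Hence $\probs{T_0, T_1}{E_1(S, T_1 \circ T_0, \ell)} = \probs{T \inu \linearmaps{u}{b}}{E_1(S, T, \ell)}$, and it remains to bound this by $\frac{1}{1-\epsilon}\probs{T_0, T_1}{E_2(S, T_0, T_1)}$.

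Next I would condition on the pair $(T_1, T)$. Conditionally, $T_0$ is uniform on the coset $\{T_0 : T_1 \circ T_0 = T\}$, i.e.\ $T_0 = T_0^* + T_0'$ for a fixed solution $T_0^*$ and $T_0'$ uniform over the linear maps $\binvecspace{u} \to W$, where $W := \ker T_1$ (so $\dim W = f-b$). Suppose $E_1(S, T, \ell)$ holds and choose, as a deterministic function of $(T_1, T)$, a witness bin $\vec y$ and a point $\vec x_0 \in S \cap T^{-1}(\vec y)$. Setting $K_A := (S - \vec x_0) \cap \ker T$, we get $S \cap T^{-1}(\vec y) = \vec x_0 + K_A$ and thus $|K_A| \geq \ell$; furthermore $T_1^{-1}(\vec y) = T_0(\vec x_0) + W$ and $T_0(S \cap T^{-1}(\vec y)) = T_0(\vec x_0) + \psi(K_A)$, where $\psi := T_0|_{\ker T} \colon \ker T \to W$. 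Since restriction to $\ker T$ maps the linear maps $\binvecspace{u} \to W$ onto the linear maps $\ker T \to W$, and $T_0^*|_{\ker T}$ is itself such a map, conditionally $\psi$ is a uniformly random linear map from $\ker T$ to $W$.

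Finally, the crux: whenever $\psi(K_A) = W$ we get $T_0(S \cap T^{-1}(\vec y)) = T_0(\vec x_0) + W = T_1^{-1}(\vec y) \subseteq T_0(S)$, so $E_2(S, T_0, T_1)$ holds. Because $K_A \subseteq \ker T$ and $|K_A| \geq \ell \geq c_\epsilon (f-b) 2^{f-b}$, \cref{proposition-epsilon} applied with $t := f - b$ on the domain $\ker T$ gives $\probs{\psi}{\psi(K_A) = W} \geq 1 - \epsilon$; the degenerate situations ($f = b$, or $\ell$ so large that $E_1$ cannot occur) are trivial. So $\Pr[E_2 \mid T_1, T] \geq 1-\epsilon$ for every $(T_1, T)$ for which $E_1$ holds, and summing over those pairs yields $\probs{T_0, T_1}{E_2} \geq (1-\epsilon)\probs{T_0, T_1}{E_1(S, T_1 \circ T_0, \ell)}$, which rearranges to the assertion with the same $c_\epsilon$ as in \cref{proposition-epsilon}. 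I expect the main obstacle to be the linear-algebra bookkeeping in the previous paragraph: verifying that the induced map $\psi$ is truly uniform over \emph{all} linear maps $\ker T \to W$, and being careful that neither $T$ nor $T_0$ need be surjective, so that one must work with $\ker T$ (of dimension $u - \operatorname{rank} T$) and the coset $T_1^{-1}(\vec y)$ rather than with fixed copies of $\binvecspace{u-b}$ and $\binvecspace{f-b}$.
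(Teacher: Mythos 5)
Your proposal is correct and takes essentially the same route as the paper's proof: realize $T = T_1 \circ T_0$ as a uniformity-preserving two-stage sampling, condition on a witness bin for $E_1$, observe that the residual randomness of $T_0$ given $(T, T_1)$ is a uniform linear map from $\operatorname{Ker}(T)$ to $\operatorname{Ker}(T_1)$, and apply \cref{proposition-epsilon} to the $\geq \ell$ elements of the heavy fiber. The only (minor, and arguably cleaner) difference is that you translate by $\vec x_0$ so as to work with a genuinely linear map $\psi \colon \operatorname{Ker}(T) \to \operatorname{Ker}(T_1)$, whereas the paper keeps the affine fibers $U_A$, $F_A$ and invokes the affine variant of \cref{proposition-epsilon}.
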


For completeness we provide a proof of \cref{proposition-e1-e2} in the appendix.

Now we estimate the probability of $E_2$.
Our \cref{proposition-e2-bound} is a slight restatement of Proposition~3.1 from \cite{alonetal}.
It is similar to Proposition~3.1 in \cite{alonetal} but gives a slightly better bound.
The proof is similar.

\begin{proposition}
\label{proposition-e2-bound}
Let $u, f, b \in \mathbb{N}$ such that $u \geq f > b$.
If $S \subseteq \binvecspace{u}$, $|S| = 2^b$, $f > b$ and $\mu = \frac{2^b}{2^f}$, then
$
\probs{T_0 \in_U \linearmaps{u}{f}, T_1 \in_U \surjectivelinearmaps{f}{b}}{E_2(S, T_0, T_1)} \leq \mu ^ {-\log b - \log \mu + \log \log \mu^{-1}}.
$
\end{proposition}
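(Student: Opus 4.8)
The plan is to bound the probability of $E_2$ by a union bound over the $2^b$ possible bins $\vec y \in \binvecspace{b}$, and for each fixed $\vec y$ estimate the probability that the entire fiber $T_1^{-1}(\vec y) \subseteq T_0(S)$. By symmetry all bins behave identically, so we get $\probs{}{E_2} \le 2^b \cdot \probs{}{T_1^{-1}(\vec 0) \subseteq T_0(S)}$. The fiber $T_1^{-1}(\vec 0) = \ker T_1$ is a subspace of $\binvecspace{f}$ of dimension $f - b$, i.e. it has $2^{f-b} = \mu^{-1}$ elements. So we need: given a random subspace $V \le \binvecspace{f}$ of dimension $f-b$ (arising as $\ker T_1$ for uniform surjective $T_1$) and the random set $T_0(S) \subseteq \binvecspace{f}$, what is the probability $V \subseteq T_0(S)$?

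\medskip

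\noindent First I would condition on $T_0$ and handle the choice of $T_1$: the key observation is that the image set $W := T_0(S)$ is itself a random subset of $\binvecspace{f}$, and we want the probability that a random $(f-b)$-dimensional subspace lies entirely inside $W$. Here I would invoke (a quantitative version of) \cref{proposition-prob-bound} in the right ``direction'': \cref{proposition-prob-bound} says that a random surjective map onto a $t$-dimensional space hits everything except with probability at most $\alpha^{u - t - \log t + \log\log(1/\alpha)}$ where $\alpha$ is the fraction of $\binvecspace{u}$ missing from the source set. The dual role here is that $\ker T_1$ should be viewed as the source of a surjection, or — more directly — one restricts attention to $T_0(S)$ versus a random complement: applying the proposition with the roles of ``ambient space'' played by $\binvecspace{f}$, the ``target dimension'' played by $b$ (since $\dim \ker T_1 = f - b$ and $f - (f-b) = b$), and the deficiency parameter governed by how much of $\binvecspace{f}$ is \emph{not} covered by $T_0(S)$.

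\medskip

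\noindent Second I would control that deficiency: since $|S| = 2^b$ and $T_0$ maps into $\binvecspace{f}$, the expected size of $T_0(S)$ is roughly $2^b(1 - o(1))$ but could in principle be much smaller if $T_0$ is very degenerate on $S$; the cleanest route is to note that a uniformly random $T_0 \in \linearmaps{u}{f}$ restricted to (the span of) $S$ behaves enough like a random linear map that $|T_0(S)|$ is concentrated near $\min(|S|, 2^f) = 2^b$. So with the complement fraction $\alpha' = 1 - |T_0(S)|/2^f \approx 1 - 2^b/2^f = 1 - \mu$, plugging into the surjectivity-failure bound and then the union-bound factor $2^b$ gives, after taking $-\log$, an exponent of the shape $b - \bigl(b - \log b + \log\log(1/\alpha')\bigr)$ against base $\alpha'$; rewriting $\alpha' \approx 1-\mu$ and converting the base from $(1-\mu)$ to $\mu$ (using $(1-\mu)^{1/\mu} \approx e^{-1}$-type estimates) should yield exactly the claimed bound $\mu^{-\log b - \log \mu + \log\log \mu^{-1}}$.

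\medskip

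\noindent The main obstacle I anticipate is the middle step: precisely pinning down the distribution of $T_0(S)$ and justifying that $\alpha' = 1 - |T_0(S)|/2^f$ is, with overwhelming probability, close enough to $1 - \mu$ that the exponent does not degrade — equivalently, ruling out the bad event that $T_0$ collapses $S$ onto a set much smaller than $2^b$. This is where the decomposition $T = T_1 \circ T_0$ with an intermediate dimension $f$ (rather than a single map) earns its keep, and where Alon et al.'s argument has to be re-run with the current parametrization; the bookkeeping of which of $\log b$, $\log \mu$, $\log\log\mu^{-1}$ shows up with which sign, and checking it beats Alon et al.'s Proposition 3.1 by the promised margin, is the delicate part. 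The rest — the union bound, symmetry over bins, and the base-change inequality for logarithms of $\mu$ versus $1-\mu$ — is routine.
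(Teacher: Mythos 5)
There is a genuine gap, and it sits exactly at the step you flag as ``the dual role.'' The paper's proof is a two-line complementation argument: the event $\exists\,\vec y\colon T_1^{-1}(\vec y)\subseteq T_0(S)$ holds if and only if $T_1(\binvecspace{f}\setminus T_0(S))\neq\binvecspace{b}$, so one applies \cref{proposition-prob-bound} with the \emph{complement} $\binvecspace{f}\setminus T_0(S)$ as the source set. Its deficiency is then $\alpha = 1 - |\binvecspace{f}\setminus T_0(S)|/2^f = |T_0(S)|/2^f \le |S|/2^f = \mu$, which is \emph{small}, and the bound $\alpha^{f-b-\log b+\log\log\alpha^{-1}}$ is immediately of the claimed form. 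You instead take the deficiency to be $\alpha' = 1 - |T_0(S)|/2^f \approx 1-\mu$, i.e.\ you measure how much of $\binvecspace{f}$ is missed by $T_0(S)$ rather than how much is missed by its complement. With $\alpha'\approx 1$ the bound $\alpha'^{(\cdots)}$ is essentially $1$ and hence vacuous, and no base change of the form $(1-\mu)^{1/\mu}\approx e^{-1}$ can recover an exponent of order $\log\mu^{-1}$: to turn $(1-\mu)^x$ into $\mu^{y}$ you would need $x$ of order $\mu^{-1}\log\mu^{-1}=2^{f-b}(f-b)$, which \cref{proposition-prob-bound} does not supply. So the reduction as written cannot produce the stated inequality.

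Two further consequences of the inversion. First, the union bound over the $2^b$ bins and the symmetry claim are unnecessary: the equivalence $E_2 \iff T_1(\binvecspace{f}\setminus T_0(S))\neq\binvecspace{b}$ quantifies over all $\vec y$ at once, so a single application of \cref{proposition-prob-bound} suffices and no factor $2^b$ is paid. Second, the ``main obstacle'' you anticipate --- concentration of $|T_0(S)|$ near $2^b$ and ruling out a degenerate $T_0$ that collapses $S$ --- disappears once the roles are set correctly: the argument is run for every \emph{fixed} $T_0$, the trivial bound $|T_0(S)|\le|S|$ gives $\alpha\le\mu$, and since $\alpha^{f-b-\log b+\log\log\alpha^{-1}}$ is increasing in $\alpha$ on $(0,1)$, a more degenerate $T_0$ only makes the bound better. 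No probabilistic statement about $T_0$ is needed at all.
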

\begin{proof}
Observe that $\exists \vec{y} \in \binvecspace{b} \colon T_1^{-1}(\vec{y}) \subseteq T_0(S)$ is equivalent to $\exists \vec{y} \in \binvecspace{b} \colon \vec{y} \not\in T_1(\binvecspace{f} \setminus T_0(S))$.
Hence $E_2(S, T_0, T_1)$ is equivalent to $T_1(\binvecspace{f} \setminus T_0(S)) \neq \binvecspace{b}$. 
Refer to \cref{fig-decomposition-e2} for more details of the situation when $E_2(S, T_0, T_1)$ occurs.

\begin{figure}[h]
	\caption{Decomposition of $T$ when event $E_2(S, T_0, T_1)$ occurs, i.e. $F_A \subseteq T_0(S)$.}
	\label{fig-decomposition-e2}
\begin{center}
	\begin{tikzpicture}
		\draw[->] (4,2) -- (5,2) node[left=0.5cm,below] {$T_0$};
		\draw[green] (2,2) circle (2cm) node[left=1cm,below=1cm] {$\binvecspace{u}$};
		\draw[green] (6.75,2) circle (1.75cm) node[left=0.75cm,below=0.75cm] {$\binvecspace{f}$};
		\draw[->] (8.5,2) -- (9.5,2) node[left=0.5cm,below] {$T_1$};
		\draw[green] (11,2) circle (1.5cm) node[left=0.6cm,below=0.6cm] {$\binvecspace{b}$};
		
		\draw[blue] (10.5,2.75) circle (0.05cm) [fill=black] node[anchor=west] {$\vec{y}$};
		\draw[->,red] (10.5,2.75) -- (7.5,2.75) node[left=-1.7cm,above] {$T_1^{-1}$};
		\draw[dashed,red] (6.75,2.75) circle (0.75cm) node[] {$F_A$};
		\draw[dotted,blue] (6.8,2.35) circle (1.2cm)  node[left=0cm,below=0.5cm] {$T_0(S)$};

		\draw[dashed,red] (2.2,2.75) circle (1cm)  node[left=-0.35cm,below=-0.5cm] {$U_A$};
		\draw[dotted,blue] (1.4,2.1) circle (1.2cm)  node[left=0.35cm,below=0.1cm] {$S$};

		\draw[->] (10.5,2.75) arc (0:152:4.2cm and 2cm) node[midway,below=0.0cm] {$T^{-1}$};
		
		\draw[magenta] (1.7,2.5) node[] {$S_A$};
				
		\draw[->,red] (3.2,2.75) -- (6,2.75) node[left=1.4cm,above] {$T_0$};
	\end{tikzpicture}
\end{center}
\end{figure}
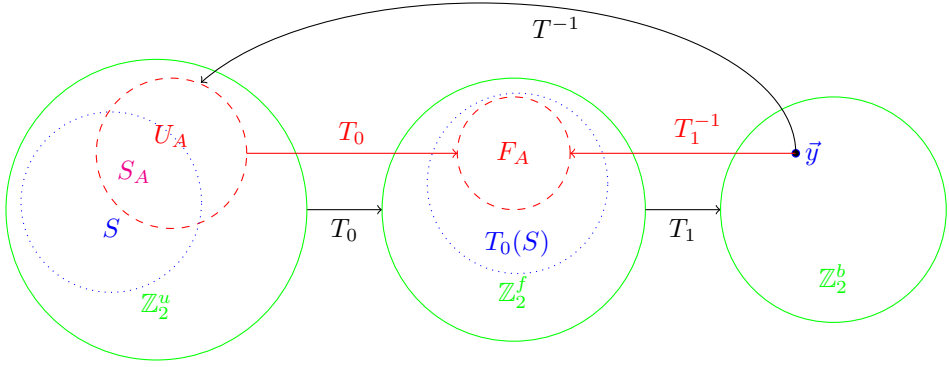

We prove the estimate for arbitrary fixed $T_0$ and uniform choice of $T_1$.
From \cref{proposition-prob-bound} it follows that $\probs{T_1\in_U \surjectivelinearmaps{f}{b}}{T_1(\binvecspace{f} \setminus T_0(S)) \neq \binvecspace{b}} \leq \alpha ^ {f - b - \log b + \log \log \alpha^{-1}}$ where $\alpha = 1 - \frac{|\binvecspace{f} \setminus T_0(S)|}{|\binvecspace{f}|} = \frac{|T_0(S)|}{2^f} \leq \frac{|S|}{2^f} = \mu = 2^{b - f} < 1$.
Since the function $\alpha ^ {f - b - \log b + \log \log \alpha^{-1}}$ is increasing w.r.t. $\alpha$ in $(0, 1)$ we get that
$
\probs{T_1 \in_U \surjectivelinearmaps{f}{b}}{E_2(S, T_0, T_1)} \leq \mu ^ {-\log \mu - \log b + \log \log \mu^{-1}}.
$
\end{proof}

The following theorem gives an upper bound for the tail distribution of the random variable $\lbin{T}{S}$.
The theorem is similar to Corollary~3.3 from \cite{alonetal}, however the stated estimate is slightly different because it is adapted to our setting.
The substantial difference between them is that we obtain non-trivial estimates for the logarithmic size of a largest bin whereas in \cite{alonetal} they get them for super-logarithmic sizes.
The theorem in turn implies \cref{theorem-n-to-n}.

\begin{theorem}
\label{theorem-prob-distribution-bound}
For each $\epsilon > 0$ there exists $c_\epsilon > 0$ such that for each $u, b \in \mathbb{N}, u \geq b$, $r \geq 	 4$ it holds that
\[
\probs{T \in_U \linearmaps{u}{b}}{\lbin{T}{S} \geq 2 c_\epsilon r} \leq \frac{1}{1 - \epsilon}\left(\frac{\log r}{r}\right)^{-\log b - \log \frac{\log r}{r} + \log \log \frac{r}{\log r}}.
\]
Moreover $c_\epsilon$ depends solely on $\epsilon$ and may be chosen according to \cref{proposition-epsilon}.
\end{theorem}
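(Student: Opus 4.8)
The plan is to combine \cref{proposition-e1-e2} with \cref{proposition-e2-bound} and then optimize over the free parameter $f$. Fix $\epsilon > 0$ and let $c_\epsilon$ be the constant from \cref{proposition-epsilon}. Given $r \geq 4$, I will set the gap $f - b$ so that the hypothesis $\ell \geq c_\epsilon (f-b) 2^{f-b}$ of \cref{proposition-e1-e2} is met with $\ell = 2 c_\epsilon r$; concretely I would take $f - b = \lfloor \log r - \log\log r \rfloor$ (this is where the "$\log r / r$" shape comes from, since then $2^{f-b} \approx r/\log r$ and $(f-b) 2^{f-b} \lesssim r$, so $\ell = 2 c_\epsilon r$ comfortably dominates $c_\epsilon (f-b)2^{f-b}$ once $r \geq 4$). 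With this choice $\mu = 2^{b-f} = 2^{-(f-b)}$ is, up to the floor, equal to $\log r / r$, so $\mu^{-1} \approx r/\log r$.

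Next I would feed these parameters into the two propositions. By \cref{proposition-e1-e2},
\[
\probs{T \inu \linearmaps{u}{b}}{\lbin{T}{S} \geq 2 c_\epsilon r} = \probs{T \inu \linearmaps{u}{b}}{E_1(S, T, 2 c_\epsilon r)} \leq \frac{1}{1-\epsilon}\probs{T_0 \inu \linearmaps{u}{f},\, T_1 \inu \surjectivelinearmaps{f}{b}}{E_2(S, T_0, T_1)},
\]
and by \cref{proposition-e2-bound} the right-hand probability is at most $\mu^{-\log b - \log \mu + \log\log \mu^{-1}}$. Substituting $\mu \approx \log r / r$ (equivalently $\mu^{-1} \approx r/\log r$) turns this into
\[
\left(\frac{\log r}{r}\right)^{-\log b - \log \frac{\log r}{r} + \log\log \frac{r}{\log r}},
\]
which is exactly the bound claimed. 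One caveat: \cref{proposition-e2-bound} is stated for $|S| = 2^b$ exactly, whereas here $|S| \leq 2^b$; I would handle the general case by padding $S$ with arbitrary extra vectors up to size $2^b$, which only increases bin sizes, so the tail bound for the padded set dominates the one we want. I also need $u \geq f$; since $f = b + \lfloor\log r - \log\log r\rfloor$, this holds whenever $r$ is not absurdly large relative to $2^{u-b}$, and the degenerate cases (where the forced $f$ would exceed $u$, or $f \leq b$) can be dispatched trivially because then the stated probability bound is $\geq 1$ or the event is vacuous.

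The main obstacle is bookkeeping the floor function: $f - b$ must be a positive integer, so I cannot literally set $\mu = \log r / r$, and I must verify both that the resulting integer $f-b$ is $\geq 1$ (needs $r \geq 4$, hence the hypothesis) and that rounding $\mu$ to the nearest power of two only perturbs the exponent $-\log b - \log \mu + \log\log\mu^{-1}$ by a bounded amount that can be absorbed into the statement (this is why the theorem has $2 c_\epsilon r$ rather than $c_\epsilon r$, and why it is stated as an asymptotic-style inequality). Checking that the exponent stays well-defined — in particular that $\log\log\mu^{-1} \geq 0$, i.e. $\mu^{-1} \geq 2$, which needs $r/\log r \geq 2$ and hence again $r \geq 4$ — is the other routine-but-essential verification. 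Once the parameter choice is pinned down, everything else is direct substitution.
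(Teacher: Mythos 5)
Your overall strategy is exactly the paper's: choose $f$ with $f-b \approx \log r - \log\log r$, take $\ell = 2c_\epsilon r$, chain \cref{proposition-e1-e2} with \cref{proposition-e2-bound}, and substitute $\mu = 2^{b-f}$. The gap is the direction of your rounding. You take $f - b = \lfloor \log r - \log\log r\rfloor$, which gives $\mu = 2^{-(f-b)} \geq \frac{\log r}{r}$ (possibly larger by a factor of $2$). But the bound supplied by \cref{proposition-e2-bound}, namely $g(\mu) = \mu^{-\log b - \log\mu + \log\log\mu^{-1}}$, is \emph{increasing} in $\mu$ on $(0,1)$ --- this monotonicity is precisely what one must invoke to replace $\mu$ by $\frac{\log r}{r}$ --- so from $\mu \geq \frac{\log r}{r}$ you only get $g(\mu) \geq g\left(\frac{\log r}{r}\right)$, which is the wrong direction. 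Your proposed remedy, that the rounding error ``only perturbs the exponent by a bounded amount that can be absorbed into the statement,'' does not work here: the theorem asserts an exact inequality whose right-hand side is $\frac{1}{1-\epsilon}\, g\left(\frac{\log r}{r}\right)$ with no slack (no spare constant factor) into which a factor-of-two change in $\mu$ could be absorbed.

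The fix is to round \emph{up}, as the paper does: set $f = \lfloor b + \log r - \log\log r + 1\rfloor$, so that $f - b \geq \log r - \log\log r$, hence $\mu \leq \frac{\log r}{r}$ and $g(\mu) \leq g\left(\frac{\log r}{r}\right)$ by monotonicity. The price is that the verification of $\ell \geq c_\epsilon (f-b)2^{f-b}$ is no longer ``comfortable'': now $2^{f-b}$ can be as large as $\frac{2r}{\log r}$, and one needs $(\log r - \log\log r + 1)\cdot \frac{2r}{\log r} \leq 2r$, i.e. $\log\log r \geq 1$, which is exactly where the hypothesis $r \geq 4$ enters (not only, as you suggest, to guarantee $f > b$ and $\mu^{-1}\geq 2$). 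With that correction the rest of your outline --- the reduction via $E_1$ and $E_2$, padding $S$ up to size $2^b$, and dispatching the degenerate parameter ranges --- matches the paper's argument.
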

\begin{proof}
Let $\epsilon, u, b, r$ be given so that they meet the requirements of the theorem.
We put $f = \lfloor b + \log r - \log \log r + 1 \rfloor$ and $\ell = \left \lceil 2c_\epsilon r \right\rceil$ where $c_\epsilon$ comes from \cref{proposition-epsilon}.
Recall that $\lbin{T}{S} \geq \ell$ is equivalent to the occurrence of event $E_1(S, T, \ell)$.

\cref{proposition-e1-e2} implies that $\probs{T \in_U \linearmaps{u}{b}}{E_1(S, T, \ell)} \leq \frac{1}{1 - \epsilon}\probs{T_0 \in_U \linearmaps{u}{f}, T_1 \in_U \surjectivelinearmaps{f}{b}}{E_2(S, T_0, T_1)}$.
We have to verify that $\ell \geq c_\epsilon (f - b)2^{f - b}$.
From the requirement $r \geq 4$ it follows that $c_\epsilon(f - b)2^{f - b} \leq c_\epsilon(\log r - \log \log r + 1)2^{\log r - \log \log r + 1} \leq \frac{2c_\epsilon r(\log r - \log \log r + 1)}{\log r} \leq 2c_\epsilon r \leq \ell$.

To bound the probability of $E_2(S, T_0, T_1)$ we use \cref{proposition-e2-bound}.
Observe that the choice of $f$ from the beginning of the proof satisfies $f > b$.
This also means that $\surjectivelinearmaps{f}{b}$ is nonempty. 
We put $\mu = 2^{b - f}$.
Since $\mu \leq 2^{-\log r + \log \log r} = \frac{\log r}{r}$ and the function $g(x) := x ^ {- \log b + \log x^{-1} + \log \log x^{-1}}$ is increasing in $(0, 1)$, from \cref{proposition-e2-bound} it follows that $\probs{T_0 \in_U \linearmaps{u}{f}, T_1 \in_U \surjectivelinearmaps{f}{b}}{E_2(S, T_0, T_1)} \leq g(\mu) \leq g\left(\frac{\log r}{r}\right)$.
\end{proof}

Now we show the proof of the main theorem.

\begin{proof}[Proof of \cref{theorem-n-to-n}]
We show the theorem for $|S| = 2^b$.
If $|S| < 2^b$, the theorem follows from the proved case.
Put $n = 2^b = |S|$.
We split $\sum_{\ell = 1}^{n} \probs{T\in_U\linearmaps{u}{b}}{\lbin{T}{S} \geq \ell}$ into two sums according to $\ell$ being lower or greater than $8c_\epsilon \log n$.
We show that in the second case the probability of $\lbin{T}{S} \geq \ell$ is $O(\ell^{-3/2})$.

First we fix $\epsilon \in (0, 1)$ arbitrarily, assume that $\ell \geq 8c_\epsilon n$ and choose $r$ so that $\ell = 2 c_\epsilon r$.
Hence $r \geq 4\log n$.
We claim that if $n$ is large enough, then the estimate obtained by \cref{theorem-prob-distribution-bound} is below $\frac{r^{-1.5}}{1-\epsilon}$.
To prove this we bound the exponent of the estimate from below as follows.
\begin{align*}
-\log b - \log \log r + \log r + \log (\log r - \log \log r) 
	& \geq -\log \log r + 2 + \log \left(\frac{3\log r}{4}\right) \\
	& = \log(3) > \frac{3}{2}.
\end{align*}
Hence when $n$ is large enough we get that $\left(\frac{\log r}{r}\right)^{\log 3} < r^{-3/2}$ and
\begin{align*}
\sum_{\ell = 8c_\epsilon \log n + 1}^{n} \probs{T\in_U\linearmaps{u}{b}}{\lbin{T}{S} \geq \ell} 
	& \leq \int_{8c_\epsilon \log n}^{n} \probs{T\in_U\linearmaps{u}{b}}{\lbin{T}{S} \geq \ell} d\ell \\
	& = 2c_\epsilon \int_{4 \log n}^{n/2c_\epsilon} \probs{T\in_U\linearmaps{u}{b}}{\lbin{T}{S} \geq 2c_\epsilon r} dr \\
	& \leq \frac{2c_\epsilon}{1-\epsilon} \int_{1}^{\infty} r^{-1.5} dr = O\left(\frac{c_\epsilon}{1-\epsilon}\right).
\end{align*}

The whole sum may be estimated as $
\sum_{\ell = 1}^{n} \probs{T\in_U\linearmaps{u}{b}}{\lbin{T}{S} \geq \ell} \leq 2c_\epsilon \left(4 \log n + \frac{O(1)}{1-\epsilon}\right).$
\end{proof}

\section{The special case when balls form a vector subspace}

Let us note that when $S$ is a subspace of the universe, then the expected size of the largest bin is constant.

\begin{theorem}
Let $b, u \in \mathbb{N}$ and $S$ be a subspace of $\binvecspace{u}$ of dimension $b$. Then \[ \expects{T \in_U \linearmaps{u}{b}}{\lbin{T}{S}} = O(1).\]
\end{theorem}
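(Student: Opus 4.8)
The plan is to reduce the claim to a one-line computation about uniformly random square matrices over $\binvecspace{}$. Since $\dim S = b$ we have $|S| = 2^b$, and the multiset of bin sizes produced by $T$ depends only on the restriction $T|_S \colon S \to \binvecspace{b}$, because bin $\vec y$ receives exactly $|(T|_S)^{-1}(\vec y)|$ balls. Both $S$ and $\binvecspace{b}$ have dimension $b$, so $T|_S$ is either bijective, in which case every bin holds exactly one ball, or $k := \dim\ker(T|_S) \geq 1$; in the latter case every non-empty fibre of $T|_S$ is a coset of $\ker(T|_S)$ and hence has exactly $2^k$ elements. In both cases
\[
\lbin{T}{S} = 2^{\dim\ker(T|_S)} = |\ker(T|_S)|.
\]

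Next I would observe that when $T$ is drawn uniformly from $\linearmaps{u}{b}$, the restriction $T|_S$ is distributed uniformly among all linear maps $S \to \binvecspace{b}$. Fix a basis $\vec e_1,\dots,\vec e_b$ of $S$ and extend it to a basis $\vec e_1,\dots,\vec e_u$ of $\binvecspace{u}$; a uniform $T$ is obtained by picking $T\vec e_1,\dots,T\vec e_u$ independently and uniformly in $\binvecspace{b}$, so in particular $T\vec e_1,\dots,T\vec e_b$, which already determine $T|_S$, are independent and uniform. Identifying $S$ with $\binvecspace{b}$ via this basis, $T|_S$ becomes a uniformly random element $M$ of $\linearmaps{b}{b}$.

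It then remains to compute $\expects{M \inu \linearmaps{b}{b}}{|\ker M|}$. By linearity of expectation and the fact that $M\vec v$ is uniform in $\binvecspace{b}$ for every fixed $\vec v \neq \vec 0$,
\[
\expects{M \inu \linearmaps{b}{b}}{|\ker M|} = \sum_{\vec v \in \binvecspace{b}} \probs{M \inu \linearmaps{b}{b}}{M\vec v = \vec 0} = 1 + (2^b - 1)\,2^{-b} = 2 - 2^{-b}.
\]
Chaining the three steps yields $\expects{T \inu \linearmaps{u}{b}}{\lbin{T}{S}} = 2 - 2^{-b} < 2 = O(1)$.

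There is no serious obstacle here; the two points that deserve a sentence of justification are that all non-empty bins have the same size — because the fibres of a linear map are cosets of its kernel — and that restricting a uniformly random linear map to a fixed subspace again produces a uniformly random linear map on that subspace, which follows from working in a basis adapted to $S$.
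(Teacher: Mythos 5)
Your proposal is correct and follows essentially the same route as the paper: both reduce the largest bin to the size of $S\cap\operatorname{Ker}(T)$ (equivalently $\ker(T|_S)$), using that all non-empty fibres are cosets of the kernel, and then bound its expectation by $1+(2^b-1)2^{-b}<2$ via linearity of expectation. Your version merely makes explicit the computation that the paper leaves as ``has a constant expected size.''
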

\begin{proof}
We first observe that the non-empty bins have a simple structure -- all of them are formed by elements which are affine subspaces of the universe.
This in turns means that all the non-empty bins have the same size.
Since the bin containing $\vec{0}$ in $\binvecspace{b}$ is always non-empty and has a constant expected size, the theorem follows.

Assume that $T \in \linearmaps{u}{b}$ is fixed.
Let $K = S \cap \operatorname{Ker}(T)$.
If $T(\vec{v}) = \vec{y}$ for some $\vec{v} \in S$, then $T^{-1}(\vec y) \cap S = \vec v + K$.
Hence for each $\vec y \in \mathbb{Z}_2^b$ it holds that $|T^{-1}(\vec y) \cap S| = 0$ or $|T^{-1}(\vec y) \cap S| = |K|$.
By $\bin{T}{S}{\vec{y}}$ we denote $|T^{-1}(\vec{y}) \cap S|$ and it holds that $|K| = \bin{T}{S}{\vec 0}$.
From this it follows that 
$
\expects{T\in_U \linearmaps{u}{b}}{\lbin{T}{S}} = \expects{T\in_U \linearmaps{u}{b}}{\bin{T}{S}{\vec 0}} = O(1).
$

\end{proof}

\section{Acknowledgment}

We would like to thank V\'aclav Koubek and Michal Kouck\'y for advices, consultations and time spent verifying this note.

\bibliographystyle{plain}
\bibliography{lhf}

\begin{appendix}
\section{Proof of \cref{proposition-e1-e2}}
We give the full proof of \cref{proposition-e1-e2} along with the necessary claims. 
\begin{proposition}
\label{claim-dstr-factor}
Let $T_1 \in \surjectivelinearmaps{f}{b}$ be fixed. Then the uniform choice of $T_0 \in \linearmaps{u}{f}$ yields the uniform choice of $T \in \linearmaps{u}{b}$ where $T = T_1 \circ T_0 $.
\end{proposition}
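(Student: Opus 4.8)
The plan is to recognize ``compose with $T_1$ on the left'' as a $\binvecspace{}$-linear map between two finite vector spaces and to use the standard fact that a surjective linear map between finite vector spaces carries the uniform distribution to the uniform distribution. Identifying linear transformations with their matrices, both $\linearmaps{u}{f}$ and $\linearmaps{u}{b}$ are vector spaces over $\binvecspace{}$, and the map $\Phi \colon \linearmaps{u}{f} \to \linearmaps{u}{b}$ given by $\Phi(T_0) = T_1 \circ T_0$ is well defined and $\binvecspace{}$-linear, since composition is additive in the second argument. The whole statement then reduces to: (i) $\Phi$ is surjective, and (ii) all fibres of a surjective linear map have the same size.

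For step (i) I would use that $T_1$, being surjective over a field, admits a linear right inverse (section) $R \in \linearmaps{b}{f}$ with $T_1 \circ R = \operatorname{id}_{\binvecspace{b}}$ — simply pick an arbitrary $T_1$-preimage for each basis vector of $\binvecspace{b}$ and extend linearly. Then, given any target $T \in \linearmaps{u}{b}$, the transformation $T_0 := R \circ T$ lies in $\linearmaps{u}{f}$ and satisfies $\Phi(T_0) = T_1 \circ R \circ T = T$, so $\Phi$ is onto.

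For step (ii) I would invoke the elementary observation that every fibre $\Phi^{-1}(T)$ is a coset of $\operatorname{Ker}\Phi$, hence $|\Phi^{-1}(T)| = |\operatorname{Ker}\Phi| = |\linearmaps{u}{f}|/|\linearmaps{u}{b}| = 2^{u(f-b)}$, independently of $T$. Combining with surjectivity, for a uniformly random $T_0 \in_U \linearmaps{u}{f}$ and any fixed $T \in \linearmaps{u}{b}$,
\[
\probs{T_0 \in_U \linearmaps{u}{f}}{T_1 \circ T_0 = T} = \frac{|\Phi^{-1}(T)|}{|\linearmaps{u}{f}|} = \frac{1}{|\linearmaps{u}{b}|},
\]
which is precisely the uniform distribution on $\linearmaps{u}{b}$, as claimed.

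I do not expect a genuine obstacle here; the only point deserving an explicit line is the surjectivity of $\Phi$, which rests on the existence of a right inverse of the surjective map $T_1$ — the one place where working over the field $\binvecspace{}$ (rather than a general ring) is used. Everything else is the routine fact that a surjective homomorphism of finite abelian groups pushes forward the uniform measure to the uniform measure.
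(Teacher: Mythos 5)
Your proof is correct. It takes a somewhat different route from the paper's: you treat left-composition $\Phi(T_0) = T_1 \circ T_0$ as a single $\mathbb{Z}_2$-linear map between the matrix spaces $\linearmaps{u}{f}$ and $\linearmaps{u}{b}$, prove it surjective by constructing a linear right inverse $R$ of $T_1$, and then invoke the equal-fibre property of a surjective homomorphism of finite groups. The paper instead argues coordinatewise: a uniform $T_0$ is an independent uniform choice of the images $T_0(\vec{e}_i) \in \binvecspace{f}$ of basis vectors, and since every fibre of the surjective $T_1$ has size $2^{f-b}$, each $T(\vec{e}_i) = T_1(T_0(\vec{e}_i))$ is uniform in $\binvecspace{b}$ and the images remain independent, which is exactly the uniform distribution on $\linearmaps{u}{b}$. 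Both proofs rest on the same principle (a surjective linear map with equal fibres pushes uniform forward to uniform), applied at different levels; the paper's version avoids constructing the section $R$ and gets surjectivity of the induced map for free from coordinate independence, while your global version is arguably cleaner in that it packages the whole statement as one application of a standard group-theoretic fact. Both are complete and correct.
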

\begin{proof}
The proof of the claim may be found in \cite{alonetal} in the proof of Theorem~7b.
Let $\vec{e}_1, \dots, \vec{e}_u$ be a basis of $\binvecspace{u}$.
Recall that the uniform choice of $T \in \linearmaps{u}{b}$ is equivalent to random and independent choice of $T(\vec{e}_i) \in \binvecspace{b}$ for $i \in \{1, \dots, u\}$.
Since $T_1$ is onto, for each $\vec{y} \in \binvecspace{b}$ we have that $|T_1^{-1}(\vec{y})| = 2^{f - b}$.
Hence the uniform independent choice of values $T_0(\vec{e}_i) \in \binvecspace{f}$ yields uniform independent choice of values $T(\vec{e}_i) = T_1(T_0(\vec{e}_i)) \in \binvecspace{b}$ for $i \in \{1, \dots, u\}$.
\end{proof}

\begin{proposition}[{\cite[Proposition~3.4, p.~13]{alonetal}}]
\label{proposition-affine-model}
Let $u, f, b \in \mathbb{N}$, such that $u \geq f \geq b$.
For a fixed $T \in \linearmaps{u}{b}$ and $T_1 \in \surjectivelinearmaps{f}{b}$ there is a bijection between $\{ T_0 \in \linearmaps{u}{f} \mid T = T_1 \circ T_0 \}$ and linear maps from $\operatorname{Ker}(T)$ to $\operatorname{Ker}(T_1)$.
\end{proposition}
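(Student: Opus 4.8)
The plan is to identify the solution set $\mathcal{T} := \{ T_0 \in \linearmaps{u}{f} \mid T = T_1 \circ T_0 \}$ as a non-empty coset of $\linearmaps{u}{f}$, and then to extract the part of $T_0$ carried by $\operatorname{Ker}(T)$. First I would check that $\mathcal{T} \neq \emptyset$: since $T_1$ is surjective it admits a linear right inverse $s \colon \binvecspace{b} \to \binvecspace{f}$ with $T_1 \circ s = \operatorname{id}$ (fix a basis of $\binvecspace{b}$, send each basis vector to an arbitrary $T_1$-preimage, extend linearly). Then $\widehat{T_0} := s \circ T \in \linearmaps{u}{f}$ satisfies $T_1 \circ \widehat{T_0} = (T_1 \circ s) \circ T = T$, so $\widehat{T_0} \in \mathcal{T}$; I fix it as a base point and record that $\widehat{T_0}$ vanishes on $\operatorname{Ker}(T)$ because $T$ does.

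Next I would describe $\mathcal{T}$ relative to $\widehat{T_0}$. For $T_0 \in \mathcal{T}$ put $D := T_0 - \widehat{T_0}$; then $T_1 \circ D = T - T = 0$, so $\operatorname{Im}(D) \subseteq \operatorname{Ker}(T_1)$, and conversely $\widehat{T_0} + D \in \mathcal{T}$ for every linear $D \colon \binvecspace{u} \to \operatorname{Ker}(T_1)$. Hence $T_0 \mapsto T_0 - \widehat{T_0}$ is a bijection from $\mathcal{T}$ onto the space of all linear maps $\binvecspace{u} \to \operatorname{Ker}(T_1)$. In particular each $T_0 \in \mathcal{T}$ coincides on $\operatorname{Ker}(T)$ with its associated $D$ and so maps $\operatorname{Ker}(T)$ into $\operatorname{Ker}(T_1)$, as it must since $T_1(T_0(\vec{k})) = T(\vec{k}) = \vec{0}$ for $\vec{k} \in \operatorname{Ker}(T)$.

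To single out the $\operatorname{Ker}(T)$-coordinates, fix a complement $W$ with $\binvecspace{u} = \operatorname{Ker}(T) \oplus W$. A linear map $D \colon \binvecspace{u} \to \operatorname{Ker}(T_1)$ is determined freely and independently by the pair $\bigl(D|_{\operatorname{Ker}(T)},\, D|_{W}\bigr)$ ranging over the linear maps $\operatorname{Ker}(T) \to \operatorname{Ker}(T_1)$ and $W \to \operatorname{Ker}(T_1)$ respectively; restricting to those $T_0 \in \mathcal{T}$ that agree with $\widehat{T_0}$ on $W$, the assignment $T_0 \mapsto T_0|_{\operatorname{Ker}(T)}$ is then a bijection onto the linear maps from $\operatorname{Ker}(T)$ to $\operatorname{Ker}(T_1)$ — the asserted correspondence — with inverse sending $D_0$ to the unique linear map equal to $D_0$ on $\operatorname{Ker}(T)$ and to $\widehat{T_0}$ on $W$.

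I expect the one delicate point to be this last bookkeeping with $W$: one has to be explicit that the parametrization pins down the otherwise free behaviour of $T_0$ off $\operatorname{Ker}(T)$. For the use in \cref{proposition-e1-e2} the content actually needed is the distributional corollary — with $T_1$ fixed and $T_0$ drawn uniformly from $\linearmaps{u}{f}$ conditioned on $T_1 \circ T_0 = T$, the induced map $\operatorname{Ker}(T) \to \operatorname{Ker}(T_1)$ is uniform over all such linear maps — which drops out of the coset description above together with \cref{claim-dstr-factor}, and this is exactly what lets one invoke the surjectivity estimate of \cref{proposition-prob-bound} there. Everything else (linearity of $s$, $\widehat{T_0}$ and $D$, and closure under addition of the maps into $\operatorname{Ker}(T_1)$) is routine.
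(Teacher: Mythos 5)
Your argument is correct and rests on the same underlying idea as the paper's proof: split $\binvecspace{u}$ as $\operatorname{Ker}(T)\oplus W$, fix a canonical lift of $T$ through $T_1$ off the kernel, and observe that the residual freedom is exactly a linear map into $\operatorname{Ker}(T_1)$. The paper implements the lift with explicit orthogonal bases (the condition $\vec{q}_{\vec{x}}\in\operatorname{Span}(B_1^E\setminus B_1)$ plays the role of your right inverse $s$), whereas your coset description $\mathcal{T}=\widehat{T_0}+\{D\colon\binvecspace{u}\to\operatorname{Ker}(T_1)\}$ packages the same computation more transparently. Where you genuinely add something is the ``bookkeeping with $W$'' you flag: the restriction map $T_0\mapsto T_0\mathord{\upharpoonright}\operatorname{Ker}(T)$ on the \emph{full} solution set is not injective --- $|\mathcal{T}|=2^{(f-b)u}$ while there are only $2^{(f-b)\dim\operatorname{Ker}(T)}$ linear maps $\operatorname{Ker}(T)\to\operatorname{Ker}(T_1)$ --- so the proposition as literally stated holds only after pinning $T_0$ down on $W$; the paper's claim that each restriction ``can be uniquely extended'' is true only under the implicit constraint $T_0(\operatorname{Span}(B^E\setminus B))\subseteq\operatorname{Span}(B_1^E\setminus B_1)$. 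You also correctly identify that the application in \cref{proposition-e1-e2} needs only the weaker (and unconditionally true) statement that the restriction map is surjective with fibers of equal size $2^{(f-b)\operatorname{rank}(T)}$, so that a uniform $T_0$ conditioned on $T_1\circ T_0=T$ induces a uniform linear map $\operatorname{Ker}(T)\to\operatorname{Ker}(T_1)$. In short: same route, but your version repairs an imprecision in the paper's statement and proof.
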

\begin{proof}
We show that when $T$ and $T_1$ are fixed, then each restriction of $T_0$ to $\operatorname{Ker}(T)$ can be uniquely extended to $\binvecspace{u}$.
Thus the bijection is defined as $T_0\mathord{\upharpoonright}\operatorname{Ker}(T) = T_K$ where $T_K$ is a linear map from $\operatorname{Ker}(T)$ to $\operatorname{Ker}(T_1)$.

Let $B$ be a orthogonal basis of $\operatorname{Ker}(T)$ and $B^E$ be an orthogonal extension of $B$ to $\binvecspace{u}$.
Similarly let $B_1$ be a orthogonal basis of $\operatorname{Ker}(T_1)$ and $B_1^E$ be an orthogonal extension of $B_1$ to $\binvecspace{f}$.

Let $\vec{x} \in \binvecspace{u}$.
There exists a unique decomposition of $\vec{x}$ into two vectors $\vec{x}_K \in \operatorname{Span}(B)$ and $\vec{x}_C \in \operatorname{Span}(B^E \setminus B)$ such that $\vec{x} = \vec{x}_K + \vec{x}_C$.
Analogically there is a unique vector $\vec{q}_{\vec{x}} \in \operatorname{Span}(B_1^E \setminus B_1)$ satisfying that $T_1(\vec{q}_{\vec{x}}) = T(\vec{x}) = T(\vec{x}_C)$.
We put $T_0(\vec{x}) = \vec{q}_{\vec{x}} + T_K(\vec{x}_K)$.
\end{proof}

\begin{propositionrep*} \emph{({\cite[Proposition 3.2, p.~11]{alonetal}}).}
For each $\epsilon > 0$ there is $c_\epsilon > 0$ such that for each $u, f, b, \ell \in \mathbb{N}$ satisfying $u \geq f \geq b$, $\ell \geq c_\epsilon (f - b)2^{f-b}$ and for arbitrary $S \subseteq \binvecspace{u}$, it holds that
$
\probs{T \in_U \linearmaps{u}{b}}{E_1(S, T, \ell)} \leq \frac{1}{1 - \epsilon}\probs{T_0 \in_U \linearmaps{u}{f}, T_1 \in_U \surjectivelinearmaps{f}{b}}{E_2(S, T_0, T_1)}.
$
In addition the value $c_\epsilon$ can be chosen according to \cref{proposition-epsilon} and depends only on $\epsilon$.
\end{propositionrep*}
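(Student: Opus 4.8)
The plan is to decompose $T=T_1\circ T_0$ as in the definition of $E_2$, condition on the pair $(T,T_1)$, and show that whenever $E_1(S,T,\ell)$ occurs the event $E_2$ occurs with probability at least $1-\epsilon$ over the remaining randomness of $T_0$; this last step is an instance of \cref{proposition-epsilon}, which is also why the constant $c_\epsilon$ in the statement is exactly the one there.

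First I would use \cref{claim-dstr-factor}: for every fixed $T_1\in\surjectivelinearmaps{f}{b}$ the map $T_0\mapsto T_1\circ T_0$ pushes the uniform distribution on $\linearmaps{u}{f}$ to the uniform distribution on $\linearmaps{u}{b}$, so all fibres $\{T_0\mid T_1\circ T_0=T\}$ have the same cardinality and $\probs{T_0,T_1}{E_2(S,T_0,T_1)}=\expects{T_1}{\expects{T\in_U\linearmaps{u}{b}}{\probs{T_0\in_U\{T_0\mid T_1\circ T_0=T\}}{E_2(S,T_0,T_1)}}}$, whereas the same rearrangement (and $\mathbf{1}[E_1]$ not depending on $T_1$) gives $\probs{T\in_U\linearmaps{u}{b}}{E_1(S,T,\ell)}=\expects{T_1}{\expects{T}{\mathbf{1}[E_1(S,T,\ell)]}}$. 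Hence it suffices to establish the pointwise bound: for all $T_1\in\surjectivelinearmaps{f}{b}$ and all $T\in\linearmaps{u}{b}$ with $E_1(S,T,\ell)$, one has $\probs{T_0\in_U\{T_0\mid T_1\circ T_0=T\}}{E_2(S,T_0,T_1)}\ge 1-\epsilon$; averaging this back over $T$ and $T_1$ then yields the proposition.

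For the pointwise bound, fix such $T,T_1$ and a witness $\vec y$ with $|T^{-1}(\vec y)\cap S|\ge\ell$, and set $U_A=T^{-1}(\vec y)$, $S_A=S\cap U_A$, $F_A=T_1^{-1}(\vec y)$; then $|S_A|\ge\ell$ and $|F_A|=2^{f-b}$ since $T_1$ is surjective. I would first note that $T_0(S_A)=F_A$ already forces $E_2(S,T_0,T_1)$, taking $\vec y$ itself as witness for $E_2$ because $T_0(S_A)\subseteq T_0(S)$. Writing $S_A=\vec v_0+S_A'$ for a fixed $\vec v_0\in S_A$ with $S_A'=S_A-\vec v_0\subseteq\operatorname{Ker}(T)$, and using that any $T_0$ in the fibre satisfies $T_0(\vec v_0)\in F_A$ and maps $\operatorname{Ker}(T)$ into $\operatorname{Ker}(T_1)$, the identity $T_0(S_A)=T_0(\vec v_0)+(T_0\mathord{\upharpoonright}\operatorname{Ker}(T))(S_A')$ shows that $T_0(S_A)=F_A$ is equivalent to $(T_0\mathord{\upharpoonright}\operatorname{Ker}(T))(S_A')=\operatorname{Ker}(T_1)$. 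As in the proof of \cref{proposition-affine-model}, choosing a basis of $\binvecspace{u}$ adapted to $\operatorname{Ker}(T)$ shows that a uniform $T_0$ in the fibre restricts to a uniform linear map $\operatorname{Ker}(T)\to\operatorname{Ker}(T_1)$; since $\dim\operatorname{Ker}(T)\ge u-b\ge f-b=\dim\operatorname{Ker}(T_1)$, fixing bases identifies this with a uniform element of $\linearmaps{k}{f-b}$, $k=\dim\operatorname{Ker}(T)$, acting on a subset of size $|S_A'|=|S_A|\ge\ell\ge c_\epsilon(f-b)2^{f-b}$. Then \cref{proposition-epsilon} gives $\prob{(T_0\mathord{\upharpoonright}\operatorname{Ker}(T))(S_A')=\operatorname{Ker}(T_1)}\ge 1-\epsilon$ with $c_\epsilon$ the constant from that proposition, hence $\probs{T_0}{E_2(S,T_0,T_1)}\ge 1-\epsilon$.

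I expect the main obstacle to be the bookkeeping in the first two paragraphs: verifying that conditioning on $(T,T_1)$ and then sampling $T_0$ uniformly from the fibre is a faithful refinement of the original product distribution on $(T_0,T_1)$ (equivalently, that \cref{claim-dstr-factor} forces all fibres to have equal cardinality), and that the restriction of a fibre-uniform $T_0$ to $\operatorname{Ker}(T)$ is uniform in $\operatorname{Hom}(\operatorname{Ker}(T),\operatorname{Ker}(T_1))$ in a way that does not depend on $T_1$, so that the $1-\epsilon$ bound survives averaging back over $T_1$. Once this reduction to a single surjectivity event is in place, the estimate is immediate from \cref{proposition-epsilon}.
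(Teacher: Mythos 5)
Your proposal is correct and follows essentially the same route as the paper: decompose $T=T_1\circ T_0$, use \cref{claim-dstr-factor} for the bookkeeping, use \cref{proposition-affine-model} to see that a fibre-uniform $T_0$ induces a uniform linear map $\operatorname{Ker}(T)\to\operatorname{Ker}(T_1)$, and apply \cref{proposition-epsilon} to the set $S_A$ of size at least $\ell\ge c_\epsilon(f-b)2^{f-b}$. The only difference is cosmetic: the paper invokes \cref{proposition-epsilon} directly for the affine map $U_A\to F_A$, whereas you translate by $\vec v_0$ and reduce to a genuinely linear surjectivity event on the kernels, which makes that step more explicit.
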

\begin{proof}[Proof of \cref{proposition-e1-e2}]
Fix $\epsilon \in (0, 1)$.
First we show that $1 - \epsilon \leq \probs{T_0, T_1}{E_2 | E_1}$.
Assume that $E_1(S, T, \ell)$ occurs, i.e. there is $\vec{y} \in \binvecspace{b}$ such that $|T^{-1}(\vec{y}) \cap S| \geq \ell$.
Put $U_A = T^{-1}(\vec{y})$, $S_A = U_A \cap S$ and $F_A = T_1^{-1}(\vec{y})$.
If $T_0(S_A) = F_A$, then $T_0(S) \supseteq T_0(S_A) = F_A = T_1^{-1}(\vec{y})$ and by definition $E_2(S, T_0, T_1)$ occurs.
See \cref{fig-decomposition-e2} for a better picture of the situation when $E_2(S, T_0, T_1)$ occurs.
Thus it is sufficient to estimate $\probs{T_0, T_1}{T_0(S_A) = F_A | E_1(S, T, \ell)}$.
To do so we further assume that $T_1$, $T$ are fixed, $T = T_1 \circ T_0$ and $E_1(S, T, \ell)$ occurs.

Since $T_1$ is onto, it holds that $|F_A| = 2^{f-b}$.
Also notice that $U_A$ and $F_A$ are affine subspaces of $\binvecspace{u}$ and $\binvecspace{f}$ and $|S_A| \geq \ell \geq c_\epsilon (f - b)2^{f-b}$.
Let $T_A$ be an affine linear map from $U_A$ to $F_A$.
From \cref{proposition-epsilon} used for $T_A$, $U_A$, $S_A$, $F_A$ we get that $\probs{T_A}{T_A(S_A) = F_A | E_1} \geq 1 - \epsilon$.
Notice that \cref{proposition-epsilon} may be used for affine linear transformations as well.
Since the previous estimate holds for arbitrary fixed $T$ and $T_1$, it holds for the uniform choice of the two transformations. Thus $1 - \epsilon \leq \probs{T_A}{T_A(S_A) = F_A | E_1} = \probs{T_A, T, T_1}{T_A(S_A) = F_A | E_1}$.
From previous and \cref{proposition-affine-model} we get that $1 - \epsilon \leq \probs{T_0, T_1}{E_2 | E_1}$.
From the previous inequality and \cref{claim-dstr-factor} it follows that $\probs{T}{E_1} = \probs{T_0, T_1}{E_1} \leq \frac{1}{1-\epsilon}\probs{T_0, T_1}{E_2}$.
\end{proof}
\end{appendix}
\end{document}